\title[Square roots of nearly planar graphs]{Square roots of nearly planar graphs}
\thanks{Supported by the ERC-CZ project LL2005 (Algorithms and complexity within and beyond bounded expansion) of the Ministry of Education of Czech Republic.}
\author[Dvo\v{r}\'{a}k]{Zden\v{e}k Dvo\v{r}\'{a}k}
\author[Lahiri]{Abhiruk Lahiri}
\author[Moore]{Benjamin Moore}
\address{Computer Science Institute, Charles University, Prague, Czech Republic}
\email{\{rakdver,lahiri,brmoore\}@iuuk.mff.cuni.cz}
\date{}
\newtheorem{thm}[equation]{Theorem}
\newtheorem{conj}[equation]{Conjecture}
\newtheorem{cor}[equation]{Corollary}
\theoremstyle{definition}
\newtheorem{obs}[equation]{Observation}
\newtheoremstyle{case}{}{}{\normalfont}{}{\itshape}{\normalfont:}{ }{}
\theoremstyle{case}
\numberwithin{equation}{section}
\newcommand{\decprob}[3]{
\smallskip\noindent\textbf{#1}
\vspace{-0.5\topsep}
\begin{description}[labelindent=\parindent]
 \setlength\itemsep{0em}
 \item[Instance] #2
 \item[Question] #3
\end{description}
}
\begin{document}

\maketitle

\begin{abstract}
We prove that it is NP-hard to decide whether a graph is the square of a 6-apex graph.
This shows that the square root problem is not tractable for squares of sparse graphs
(or even graphs from proper minor-closed classes).
\end{abstract}

\section{Introduction}
Given a graph $G$, the \textit{square} of $G$, denoted $G^{2}$, is the graph where $V(G^{2}) = V(G)$, and $uv \in E(G^{2})$ if and only if the distance between $u$ and $v$ in $G$ is at most $2$. For a graph $G$, a \textit{square root} of $G$ is a graph $H$ such that $H^{2} = G$. Note that a graph may have many possible square roots, for example, $K_{5}$ has both $C_{5}$ and $K_{1,4}$ as square roots. 

Let $\mathcal{H}$ be a fixed class of graphs. The problem we are interested in is:

 \decprob{$\mathcal{H}$-square-root}{A graph $G$.}{Is there a square root $H$ of $G$ such that $H \in \mathcal{H}$?}

There are some classes for which $\mathcal{H}$-square-root is known to be in P. For instance, if $\mathcal{H}$ is the class of bipartite graphs \cite{bipartiteroots}, outerplanar graphs \cite{Outerplanarroots}, proper interval graphs \cite{splitandchords} or the class of graphs with girth at least 6 \cite{girth6}, then $\mathcal{H}$-square-root is in P. Further, it is possible to compute some cut vertices of square roots given the square, \cite{findingcutvertices}, so the $\mathcal{H}$ problem is in P for classes of graphs such as trees, cacti, and block graphs.  On the other hand, the problem is known to be NP-complete when $\mathcal{H}$ is the class of all graphs \cite{MOTWANI} (despite a characterization of when a graph has a square root \cite{squarecharacterization}),  graphs with girth at least $5$ \cite{girth5}, chordal graphs \cite{splitandchords} and split graphs \cite{splitandchords}. 

Observing that the hard graph classes above are relatively dense, it has been asked multiple times if the $\mathcal{H}$-square-root problem is in P for sparse graph classes (for instance, see \cite{Outerplanarroots}). We show here that the answer is no. Recall that a graph $G$ is \textit{$k$-apex} if there exists a set of at most $k$ vertices such that the deletion of these vertices results in a planar graph. Using a mild tweak on the reduction in \cite{splitandchords} (showing the hardness of finding chordal square roots), we prove the following theorem: 

\begin{thm}
\label{mainthm}
The $\mathcal{H}$-square-root problem is NP-complete for the class $\mathcal{H}$ of $6$-apex graphs. 
\end{thm}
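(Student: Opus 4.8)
The plan is to establish membership in NP directly and to obtain NP-hardness by adapting the reduction of \cite{splitandchords} for chordal square roots.

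For membership in NP, the natural certificate for a yes-instance $G$ is a square root $H$ itself. Since $V(H)=V(G)$ and $H$ has at most $\binom{|V(G)|}{2}$ edges, $H$ has polynomial size. Given $H$, I would first verify that $H^2=G$, which reduces to an all-pairs distance computation and takes polynomial time. To certify that $H\in\mathcal H$, I would exploit that $6$ is a fixed constant: iterate over all subsets $S\subseteq V(H)$ with $|S|\le 6$ and test whether $H-S$ is planar. There are $O(|V(H)|^{6})$ such subsets and planarity testing is linear, so the whole check is polynomial, and the problem lies in NP.

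For NP-hardness I would start from the reduction of \cite{splitandchords}, which maps an instance of a suitable NP-complete problem to a graph $G$ so that $G$ has a chordal square root exactly when the original instance is positive. The essential feature I would lean on is \emph{rigidity}: the adjacencies of $G$ force every square root $H$ to contain a fixed skeleton of mandatory edges, leaving only a controlled family of optional edges whose choice encodes a candidate solution of the base problem. I would reuse this skeleton, so that the space of square roots of $G$ is again parametrised by candidate solutions. The ``mild tweak'' is then to arrange the gadgets so that the decisive graph-class test becomes ``$6$-apex'' instead of ``chordal''. Concretely, I would isolate a bounded set of at most six high-degree hub vertices in the skeleton — the vertices through which the gadgets interconnect and which are the only possible source of nonplanarity — and prove two implications. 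First, for a positive instance the square root corresponding to a valid solution becomes planar after deleting these six hubs, so a $6$-apex square root exists. Second, by rigidity every square root of $G$ (in particular every $6$-apex one) realises some candidate solution, and I would show that such a candidate can be completed to a $6$-apex graph only when it is genuinely valid. Combining the two directions gives the equivalence, and since the construction is plainly polynomial, NP-hardness follows.

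I expect the main obstacle to be this structural analysis, carried out on both sides at once: proving that (i) six apex vertices really do planarise the intended square root, and (ii) the gadgets stay rigid enough that no alternative square root can meet the apex bound while failing to encode a genuine solution. Keeping the apex number as small as six while retaining enough gadget complexity to pin down the skeleton is the delicate balance, and verifying planarity of the large minor obtained after the six-vertex deletion is where I anticipate the bulk of the casework will sit.
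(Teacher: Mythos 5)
Your NP-membership argument is fine (and is in fact the right way to handle the apex test, which the paper side-steps by proving a promise version of the theorem), but the hardness half has a genuine gap. The reduction of \cite{splitandchords} starts from ordinary set splitting, and the tail gadgets force the element--set incidence edges $x_cx_s$ ($s\in c$) into \emph{every} square root of $G$. So any square root contains the incidence graph of the set system as a subgraph, and for a general set-splitting instance this bipartite graph can be arbitrarily far from planar: deleting six vertices (or any constant number) cannot planarise it. Your assumption that a bounded set of ``hub'' vertices is ``the only possible source of nonplanarity'' is therefore false for the skeleton you propose to reuse. The missing idea --- and really the heart of the paper --- is to change the base problem: one must first prove NP-hardness of a set-splitting variant whose \emph{incidence graph is planar}, so that the forced part of any square root is planar and only the constantly many hub vertices ($a_1,a_2,a_3,b_1,b_2,b_3$ in the paper) are nonplanar attachments. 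The paper does this by introducing ``three parts planar set splitting'' and proving it NP-complete via a reduction from planar $3$-colouring; the move from two parts to three is not cosmetic, since splitting into two parts with a planar incidence graph is essentially planar NAE-SAT, which is polynomial-time solvable.

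A secondary point: your second implication aims at the wrong target. In the correct construction the apex condition plays no role in ruling out bad candidates: the rigidity observations show that \emph{any} square root of $G$ (6-apex or not) yields a valid $3$-partition --- an invalid candidate fails to be a square root at all, because some edge $a_ix_c$ of $G$ would not be realised at distance two in $H$. The apex bound is needed only in the positive direction, to show that a YES-instance has \emph{some} 6-apex root. This asymmetry is what lets the paper conclude hardness even of the promise version (graphs guaranteed to have either no square root or a 6-apex one), which is strictly stronger than the statement you set out to prove.
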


Actually, we prove the following slightly stronger result, showing the hardness of the following version of the problem.

\decprob{Promise-$\mathcal{H}$ square-root}{Graph $G$ that either has no square root, or has a square root belonging to~$\mathcal{H}$.}{Does $G$ have a square root?}

The difference here is that we are allowed not to deal with the graphs that have a square root, but no square root belonging to $\mathcal{H}$
(whereas $\mathcal{H}$-square-root has to answer NO for such instances).

\begin{thm}
\label{mainthm-prom}
The promise-$\mathcal{H}$ square-root problem is NP-complete for the class $\mathcal{H}$ of $6$-apex graphs. 
\end{thm}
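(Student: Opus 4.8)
The plan is to reduce from the same NP-hard source problem that drives the chordal-square-root reduction of \cite{splitandchords}, retaining its target graph $G$ and its impossibility analysis as far as possible, but re-engineering the square root so that in a YES instance it is $6$-apex rather than merely chordal. Membership in NP needs no work: a square root $H$ of $G$ satisfies $V(H)=V(G)$, so it has polynomial size, and $H^2=G$ is checkable in polynomial time; under the promise, a YES instance is guaranteed to admit a $6$-apex root, which is then a legitimate certificate. So the whole burden is the hardness reduction, and within it the novelty is entirely on the side of producing (and certifying) a sparse root.

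The enabling observation is that density in a square is cheap to manufacture from a sparse root: if a single vertex $v$ of $H$ is adjacent to a set $S$, then $S\cup\{v\}$ becomes a clique in $H^{2}$, and a star is planar. Hence the large cliques and dense neighborhoods that force chordality of the \cite{splitandchords} root can instead be produced by planar star gadgets hung off a bounded number of hub vertices. The mild tweak is to rebuild each variable and clause gadget in this star form and to lay the gadgets out in the plane so that every obstruction to planarity is funneled through a fixed set $A$ of hub vertices; deleting $A$ then leaves a planar graph, certifying that the root is $6$-apex. I would first embed the individual gadgets as planar pieces, then track which edges cross, confirming that all crossings are incident to $A$ and that $|A|\le 6$.

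Correctness then splits into the two standard directions. For the forward direction (YES $\Rightarrow$ $6$-apex root), I would build the explicit root $H$ from the satisfying assignment, verify $H^{2}=G$ gadget by gadget, and verify $6$-apexness via the layout above by exhibiting a planar embedding of $H-A$. For the reverse direction (NO $\Rightarrow$ \emph{no} square root at all), I would reuse the forcing analysis of \cite{splitandchords}: an adjacency $uv\in E(G)$ forces $uv\in E(H)$ or a common neighbor of $u,v$ in $H$, while a non-adjacency forbids both, and these constraints pin down the structure of any putative root tightly enough that its existence would encode a solution of the original instance. Keeping $G$ (essentially) unchanged is precisely what lets this impossibility argument carry over.

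The main obstacle I expect is the interaction between the two requirements on the YES-case root: the star-based redesign must simultaneously reproduce \emph{exactly} the edge set of $G$ (no spurious distance-$2$ pairs created by the new hub vertices, and no required edge lost) and confine all non-planarity to at most six apex vertices. Verifying these together—especially checking that the hubs do not merge neighborhoods in a way that adds unwanted edges to $H^{2}$—is the delicate core of the argument; if the tweak forces even a slight change to $G$, I would additionally have to re-run the \cite{splitandchords} forcing analysis on the modified target to confirm that NO instances still admit no root whatsoever.
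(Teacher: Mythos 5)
Your proposal gets the cheap local observation right (hub vertices make cliques in the square at no cost to planarity, and the hubs can serve as apexes --- the paper indeed uses six such hubs), but it misses the one idea that makes the theorem work, and the plan as stated breaks at a specific point. The problem is your intention to keep the source problem of \cite{splitandchords} (set splitting with \emph{no} planarity restriction) and ``re-engineer the square root'' so that all non-planarity is funneled through a bounded hub set $A$. You cannot, because the square root is not yours to design: the very tail-forcing analysis you plan to reuse for the NO direction pins down, in \emph{every} square root $H$, the neighborhood of each set-vertex as $N_H(x_c)=\{x_c^3\}\cup\{x_s : s\in c\}$. Hence every square root of the constructed graph contains the element--set incidence graph of the instance as a subgraph, on vertices disjoint from any hub set. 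For generic set splitting instances this incidence graph is far from planar --- it can contain arbitrarily large clique minors --- so no square root is $6$-apex (or $k$-apex for any fixed $k$): the crossings live among forced edges between non-hub vertices and cannot be laid out away or attributed to $A$. This also invalidates the reduction at a formal level for the promise problem: a YES instance of non-planar set splitting would map to a graph that has square roots but no $6$-apex one, i.e., an instance violating the promise, so your map would not even be a reduction to the promise problem. Note the underlying tension: the forcing you need for the NO direction is exactly what forbids you from freely redesigning the YES-direction root.

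The actual ``mild tweak'' in the paper is on the source side, not the root side: it first proves NP-completeness of a \emph{planar} variant of set splitting (``three parts planar set splitting,'' in which the incidence graph of $S,\mathcal{C}$ is required to be planar; three parts are used so that one can reduce from planar $3$-colourability). With this source, the forced part of any square root --- the incidence graph plus pendant paths coming from the tails --- is planar, and the only non-planarity in the YES-case root comes from the six hubs $a_1,a_2,a_3,b_1,b_2,b_3$, which are precisely the apex vertices. If you insert this planarization step into your outline (and adapt the gadget to the three-part version, with three $a_i$'s and three tailed $b_i$'s), the rest of your plan goes through essentially as in the paper.
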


Clearly, Theorem~\ref{mainthm-prom} implies Theorem~\ref{mainthm}.
Let us remark that the class of $6$-apex graph is minor-closed.  As such, it lies at the bottom of the bounded-expansion hierarchy of
sparse graph classes~\cite{nesbook}.  Still, it is natural to ask whether the apex vertices are indeed necessary, or whether
already the planar square-root problem is hard.  We believe the latter is the case.

\begin{conj}
The $\mathcal{H}$-square-root problem is NP-complete for the class $\mathcal{H}$ of planar graphs.
\end{conj}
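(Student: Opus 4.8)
The plan is to establish the conjecture by giving a polynomial-time reduction from a planar NP-hard constraint problem, most naturally \textsc{Planar-3-SAT} or \textsc{Planar-(1-in-3)-SAT}, whose instances come equipped with a fixed planar embedding of the variable–clause incidence graph. Membership in NP is immediate: one guesses a candidate root $H$ and checks both $H^2=G$ and planarity of $H$ in polynomial time. The substance is the hardness direction, and the guiding principle is that we get to design $G$, so we should design it so that \emph{every} planar square root $H$ of $G$ is forced into an essentially unique combinatorial shape, the only remaining freedom being a binary choice inside each variable gadget that we read as a truth value.

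The first and conceptually most important step is to understand why the reduction behind Theorem~\ref{mainthm-prom} produces apex vertices, and then to eliminate them. In square-root constructions one typically needs a rigid ``skeleton'' fixing the positions of the gadget vertices inside $H$; the existing reduction apparently realizes this skeleton using a bounded number of high-degree hub vertices, whose incident stars square into large cliques and whose presence is exactly what destroys planarity. To planarize, I would replace each global hub by a \emph{distributed, planar enforcer network}: local rigidity gadgets attached to each gadget vertex that, purely through their neighborhoods in $G$, force their attachment point to be a prescribed vertex of any square root $H$. A standard device here is to hang pendant ``identifier'' subgraphs whose degree sequence and mutual distances in $G$ admit only one planar realization up to the intended symmetry, so that in every root the corresponding vertices and edges are pinned down.

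With rigidity localized, the second step is to lay the gadgets out along a planar embedding of the input instance: variable gadgets at the variable vertices, clause gadgets at the clause vertices, and \emph{wire gadgets} routed along the embedding edges, so that the whole root $H$ inherits planarity from the embedding. Each wire must \emph{transmit} the binary state of its variable gadget to the incident clause gadget by controlling which distance-$2$ relations, hence which edges of $G$, are present, and it must do so without creating any $K_5$ or $K_{3,3}$ minor in $H$. This is where squaring is dangerous: a single vertex of $H$ that had to see both ends of a long wire would acquire a non-planar neighborhood. The remedy is to make every edge of $G$ ``witnessed'' by a short, local structure in $H$, so that no vertex of $H$ needs unbounded degree and each wire remains a planar, path-like gadget; the clause gadget then forces a local distance pattern realizable by a planar root if and only if at least one (respectively exactly one) incident literal is in the satisfying state.

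The main obstacle I anticipate is the tension between \emph{rigidity and planarity}: the two requirements pull against each other. The enforcer and clause gadgets must be strong enough that the only planar roots of $G$ correspond to satisfying assignments, yet weak enough---in particular low-degree enough---that such planar roots genuinely exist when the assignment is satisfying. Proving the forward direction (satisfiable $\Rightarrow$ a planar root exists) should be routine once the gadgets are fixed, but the reverse direction demands a careful structural analysis showing that \emph{any} planar square root $H$ decomposes along the gadget boundaries exactly as intended. The freedom that planarity still leaves---face choices, reflections, and alternative short paths realizing the same squared distances---is precisely what must be ruled out, and controlling it is, I expect, the crux of the entire argument.
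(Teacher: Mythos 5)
This statement is posed in the paper as an open conjecture---the authors give no proof of it---so the relevant question is whether your text actually settles it. It does not: what you have written is a research plan in which every load-bearing component (the ``distributed planar enforcer network,'' the ``wire gadgets,'' the clause gadget) is named but never constructed, and no lemma is proved. Your diagnosis of the existing reduction is essentially accurate: the six hubs $a_1,a_2,a_3,b_1,b_2,b_3$ are the sole source of non-planarity, and indeed the tails at $b_1,b_2,b_3$ force, in \emph{every} square root, each $b_i$ to be adjacent to all element vertices $x_s$, so any root contains $K_{3,3}$ once $|S|\geq 3$. But recognizing this does not remove the obstacle. The only known rigidity device in this area, the $X$-tail of Observation~\ref{tail}, constrains \emph{all} square roots uniformly, with no sensitivity to planarity. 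Your plan requires something qualitatively new: either (a) a gadget in which unsatisfied clauses leave square roots in existence but force a $K_5$ or $K_{3,3}$ minor into every one of them, or (b) a planar analogue of the hub mechanism by which a clause vertex $x_c$ can ``query'' which part $S_i$ each element lies in (in the paper this is exactly the step where $x_ca_i\in E(G)$ forces some $x_sa_i\in E(H)$ with $s\in c$, via the hubs). Neither device is exhibited, and (b) is precisely what fails when the hubs are deleted: without a vertex adjacent in $H$ to representatives of all of $S_i$, there is no vertex of $G$ whose distance-$2$ neighborhood reads off the assignment.

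There is also a quantifier issue your sketch glosses over. The paper proves hardness of the \emph{promise} version: its NO-instances have no square root at all, so the structure of all roots is controlled by tails alone. Your plan targets the non-promise planar problem, where NO-instances may well admit (non-planar) square roots, and you must show that none of them is planar---including roots that do not respect your intended gadget decomposition. You acknowledge in your final paragraph that ruling out such unintended planar realizations (``face choices, reflections, and alternative short paths'') is ``the crux of the entire argument,'' which is a candid admission that the proof's central step is missing. In short: the strategy is a reasonable way to \emph{think about} the conjecture, and your account of why the current construction needs six apex vertices is correct, but no part of the conjecture has been proved.
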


\section{Set Splitting Preliminaries}

Before proving Theorem \ref{mainthm}, we first show a variant of set splitting is NP-complete. Let $S$ be a set and $\mathcal{C}$ a collection of subsets of $S$. The \textit{incidence graph} of $S$ and $C$ is the bipartite graph $I$ with bipartition $(A,B)$ such that for every element $e \in S$, we have a vertex $x_{e} \in A$, and for every set $c \in \mathcal{C}$, we have a vertex $v_{c} \in B$, and $x_{e}v_{c} \in E(I)$ if and only if $e \in c$. We are interested in the following problem:

\medskip

\decprob{Three parts planar set splitting}{A set $S$,  a collection $\mathcal{C}$ of subsets of $S$ where all subsets have at least three elements, and such that the incidence graph of $S$ and $\mathcal{C}$ is planar.}{Is there a partition of $S$ into three sets $S_{1},S_{2},S_{3}$ such that for all $c \in \mathcal{C}$ there is an element of $c$ in $S_{i}$ for each $i \in \{1,2,3\}$?}

\medskip

We first prove this problem is NP-complete by a reduction from reducing planar $3$-colouring, which is well-known to be NP-complete~\cite{garey}.

\begin{obs}
The three parts planar set splitting problem is NP-complete. 
\end{obs}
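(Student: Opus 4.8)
The plan is to check membership in NP and then reduce from planar $3$-colouring. Membership is immediate: given a candidate partition $(S_1,S_2,S_3)$, one verifies in linear time that every $c\in\mathcal{C}$ meets all three parts. For hardness I would take a planar graph $G$ and build an instance whose feasible three-part splittings correspond exactly to proper $3$-colourings of $G$, the three parts playing the role of the three colours.

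The construction is as follows. Let $S=V(G)\cup\{h_e:e\in E(G)\}$, where each $h_e$ is a fresh \emph{helper} element, and for each edge $e=uv$ put $c_e=\{u,v,h_e\}$, setting $\mathcal{C}=\{c_e:e\in E(G)\}$. Every set has exactly three elements, so the size requirement holds, and the instance is clearly constructible in polynomial time.

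The observation driving correctness is that a three-element set meets all three parts if and only if its three elements lie in three distinct parts. Reading a partition as a colouring $\varphi\colon S\to\{1,2,3\}$, the constraint on $c_e=\{u,v,h_e\}$ forces $u$, $v$, $h_e$ to get pairwise distinct colours; in particular $u$ and $v$ differ. Hence the restriction of any feasible splitting to $V(G)$ is a proper $3$-colouring of $G$. Conversely, from a proper $3$-colouring of $G$ I would keep the colour of each vertex and give each helper $h_e$ (with $e=uv$) the unique colour missing from $\{u,v\}$; as the helpers are private to their own sets this is well defined and makes every $c_e$ rainbow. Thus $G$ is $3$-colourable if and only if the constructed instance admits the required partition.

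It remains to check that the incidence graph $I$ is planar, which is where the construction is designed to pay off. Starting from a planar embedding of $G$, the graph $I$ is obtained by subdividing each edge $e=uv$ with the set-vertex $v_{c_e}$ (adjacent to $x_u$ and $x_v$) and then attaching the pendant element-vertex $x_{h_e}$ to $v_{c_e}$; edge subdivision preserves planarity and a pendant vertex may always be placed in an incident face, so $I$ is planar. I expect the only genuinely conceptual step to be encoding the pairwise ``endpoints differ'' constraints of $3$-colouring as the size-three rainbow constraints of set splitting via the helper elements — once that translation is fixed, both planarity and the size bound follow routinely.
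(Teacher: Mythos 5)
Your proposal is correct and is essentially the paper's own reduction: your helper elements $h_e$ and sets $c_e=\{u,v,h_e\}$ are exactly the paper's added vertices $z_{uv}$ and the triangles containing them, and the planarity argument (incidence graph = subdivision of $G$ plus pendant vertices) is the same. No substantive difference.
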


\begin{proof}
Let $G$ be an instance of planar $3$-colouring. Let $G'$ be the graph obtained from $G$ by taking every edge $e=xy$ and adding a new vertex $z_{xy}$ adjacent to both $x$ and $y$. Observe that $G'$ is planar, and that $G'$ is $3$-colourable if and only if $G$ is $3$-colourable. Further note that for any new vertex $z_{xy}$, in any $3$-colouring of $G'$, all three colours appear on the triangle $x,y,z_{xy}$. 

Now let $S$ be the set of vertices of $G'$, and let $\mathcal{C}$ be the collection of triangles which contain a vertex in $V(G') -V(G)$. Notice that the incidence graph of $S$ and $C$ is planar, since for any vertex in $V(G') -V(G)$, this vertex corresponds to a vertex of degree one in the incidence graph, and after removing all such vertices, we simply end up with a graph that is a subdivision of $G$. We claim that $G'$ (and hence $G$) has a $3$-colouring if and only if there is a partition of $S$ into three sets $S_{1},S_{2},S_{3}$ such that for each set $c \in \mathcal{C}$, there is an element of $c$ which belongs to $S_{i}$ for $i \in \{1,2,3\}$.  

To see this, suppose $G'$ has a $3$-colouring $f$. Let $S_{i}$ for $i \in \{1,2,3\}$ be the set of vertices which get colour $i$ in $f$. This is a partition of $S$, and for any triangle $T$ containing a vertex in $V(G')-V(G)$, $f$ colours the triangle with three distinct colours, hence we get a solution to the three parts planar set splitting problem.  Conversely, if $S_{1},S_{2},S_{3}$ is a partition of $S$ such that for any set $c \in \mathcal{C}$, there is an element of $c$ in each of $S_{1},S_{2},S_{3}$, then simply colour the vertices of $G'$ with colour $i$ if the vertex lies in $S_{i}$.
Since every edge of $G'$ is contained in a triangle belonging to $\mathcal{C}$, this is a proper $3$-colouring of $G'$, completing the reduction. 
\end{proof}

\section{The reduction}

Before jumping into the reduction, we first recall the ``tail'' structure which is frequently used in square root complexity results (see, for example, \cite{MOTWANI, splitandchords,girth5}). 

Let $G$ be any graph.  Let $N_{G}(v)$ denote the set of neighbors a vertex $v$ in $G$.
An \emph{$X$-tail} at a vertex $v \in V(G)$ consists of three vertices $v_{1},v_{2},v_{3} \in V(G)$
such that $N_G(v)\neq\{v_2,v_3\}$, $N_G(v_1)=\{v_2,v_3\}$, $N_G(v_2)=\{v_1, v_3, v\}$, and $N_G(v_3)=\{v_1,v_2,v\}\cup X$, where
$X$ is a subset of $N_G(v)\setminus \{v_2,v_3\}$.
The following observation is well known and illustrates why tails are useful: 

\begin{obs}[\cite{MOTWANI}]
\label{tail}
If $G$ is a graph and $v_1$, $v_2$, $v_3$ is an $X$-tail at $v$, then for any square root $H$ of $G$,
we have that $N_{H}(v_{1}) = \{v_{2}\}$, $N_{H}(v_{2}) = \{v_{1},v_{3}\}$,
$N_{H}(v_{3}) = \{v,v_{2}\}$, and $N_{H}(v) = \{v_3\}\cup X$.
\end{obs}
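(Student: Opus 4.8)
The plan is to work entirely from the defining relation $G=H^2$, which I translate into two dictionaries: every edge of $H$ is an edge of $G$ (a length-$1$ path has length $\le 2$), so $N_H(u)\subseteq N_G(u)$ for all $u$; and for any pair $a,b$, if $ab\in E(G)$ then either $ab\in E(H)$ or $a,b$ have a common $H$-neighbour, while if $ab\notin E(G)$ then $ab\notin E(H)$ \emph{and} $N_H(a)\cap N_H(b)=\emptyset$. Feeding the tail into the containment dictionary immediately gives $N_H(v_1)\subseteq\{v_2,v_3\}$, $N_H(v_2)\subseteq\{v_1,v_3,v\}$, and $N_H(v_3)\subseteq\{v_1,v_2,v\}\cup X$. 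The non-edges I will lean on are $v_1v\notin E(G)$ (whence $v_1\notin N_H(v)$ and $N_H(v_1)\cap N_H(v)=\emptyset$) together with $v_1x,v_2x\notin E(G)$ for every $x\in X$, which hold because each such $x$ lies outside $\{v_1,v_2,v,v_3\}$.

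First I would prove $v_2v_3\in E(H)$. If not, then since $v_2\sim_G v_3$ they have a common $H$-neighbour, and intersecting the two containments forces it to be $v_1$ or $v$. If it is $v_1$, then $v_1v_2,v_1v_3\in E(H)$, and the non-edge $v_1v$ expels $v_2,v_3$ from $N_H(v)$, so $v\sim_G v_2$ has no available common neighbour — a contradiction. If it is $v$, then $vv_2,vv_3\in E(H)$, and the non-edge $v_1v$ forces $N_H(v_1)=\emptyset$, making $v_1$ isolated and hence never within distance $2$ of $v_2$ — again impossible.

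The crux, and the step I expect to be the main obstacle, is ruling out the \emph{mirror} configuration and showing $N_H(v_1)=\{v_2\}$, i.e.\ that $v_1v_3\notin E(H)$. Assuming $v_1v_3\in E(H)$, the non-edge $v_1v$ gives $vv_3\notin E(H)$, so the edge $v\sim_G v_3$ is realised through a common neighbour in $(\{v_2\}\cup X)\cap N_H(v)$. A neighbour in $X$ is impossible: an $x\in X$ with $v_3x\in E(H)$ would make $v_3$ a common neighbour of $v_1$ and $x$, turning the non-edge $v_1x$ into an edge of $G$. The surviving possibility is that $v$ and $v_3$ meet at $v_2$, so $vv_2\in E(H)$; this is exactly the symmetric situation where $v_1$ and $v$ swap roles, and here the non-edge $v_1v$ forces $N_H(v_1)=\{v_3\}$ and $v_1v_2\notin E(H)$. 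This is where the hypothesis $N_G(v)\neq\{v_2,v_3\}$ becomes essential: any neighbour $w\neq v_2$ of $v$ would satisfy $w\sim_G v_2$ through the common neighbour $v$, forcing $w\in\{v_1,v_3\}$, yet $vv_1,vv_3\notin E(H)$ — so $N_H(v)=\{v_2\}$, and then $N_{H^2}(v)\subseteq\{v_2,v_3\}$ (using $v_1\notin N_H(v_2)$), contradicting the presence of a neighbour of $v$ outside $\{v_2,v_3\}$. Thus $v_1v_3\notin E(H)$, and since $v_1$ cannot be isolated we conclude $N_H(v_1)=\{v_2\}$.

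The remainder is a deterministic cascade. From $N_H(v_1)=\{v_2\}$ and the non-edge $v_1v$ I get $vv_2\notin E(H)$, so $v\sim_G v_2$ must route through the only candidate $v_3$, giving $vv_3\in E(H)$; combined with the containment and $vv_2\notin E(H)$ this yields $N_H(v_2)=\{v_1,v_3\}$. For $v_3$, any $v_3x$ with $x\in X$ would make $v_3$ a common neighbour of $v_2$ and $x$ and hence create the forbidden edge $v_2x$ in $G$, so $N_H(v_3)=\{v_2,v\}$. Finally, for each $x\in X$ the edge $x\sim_G v_3$ must pass through $N_H(v_3)=\{v_2,v\}$, and since $xv_2\notin E(H)$ this forces $vx\in E(H)$; conversely, any neighbour $w$ of $v$ outside $\{v_3\}\cup X$ would, through $v$, become a $G$-neighbour of $v_3$ and thus lie in $N_G(v_3)=\{v_1,v_2,v\}\cup X$, which is impossible. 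Hence $N_H(v)=\{v_3\}\cup X$, completing the argument.
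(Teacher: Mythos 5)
Your proof is correct, but there is nothing in the paper to compare it against: the paper states this observation as well known and cites Motwani and Sudan for it, giving no proof of its own. Your argument is a valid self-contained verification. The two dictionaries you extract from $G=H^2$ (edge containment $E(H)\subseteq E(G)$, and the edge/non-edge conditions on common $H$-neighbours) are exactly the right tools, and each step checks out: the forced edge $v_2v_3\in E(H)$, the exclusion of the mirror configuration $N_H(v_1)=\{v_3\}$, and the final cascade determining $N_H(v_2)$, $N_H(v_3)$, $N_H(v)$. In particular, you correctly identified that the hypothesis $N_G(v)\neq\{v_2,v_3\}$ is precisely what breaks the symmetry between $v_1$ and $v$ --- without it, the configuration where $v_1$ and $v$ swap roles (i.e.\ $N_H(v_1)=\{v_3\}$, $N_H(v)=\{v_2\}$) would give a genuine second square root of the tail, so this hypothesis cannot be avoided and your use of it (showing $N_{H^2}(v)\subseteq\{v_2,v_3\}$ in that configuration, contradicting the existence of a $G$-neighbour of $v$ outside $\{v_2,v_3\}$) is exactly where it must enter. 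One small point worth making explicit if this were written up formally: the distinctness of $v,v_1,v_2,v_3$ and the fact that every $x\in X$ lies outside $\{v,v_1,v_2,v_3\}$, which you assert and use for the non-edges $v_1x,v_2x\notin E(G)$, do follow from the neighbourhood constraints in the definition of a tail ($v\notin N_G(v_1)$ forces $v_1\notin N_G(v)\supseteq X$), so no additional hypothesis is needed.
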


Now we give the reduction from the three part planar set splitting problem, which we have shown to be NP-complete in the previous section.

Let $S, \mathcal{C}$ be an instance of the three part planar set splitting problem, with $\mathcal{C}\neq\emptyset$. We construct a graph $G$ with the following vertex set:

\begin{itemize}
    \item A vertex $x_{s}$ for each element $s \in S$,
    \item For each $c \in \mathcal{C}$, we add a vertex $x_{c}$, as well as three additional vertices $x_{c}^{1},x_{c}^{2},x_{c}^{3}$,
    \item Vertices $a_i$, $b_i$, $b_i^1$, $b_i^2$, $b_i^3$ for $i\in\{1,2,3\}$.
\end{itemize}
And edge set:
\begin{itemize}
    \item[(i)] For any two distinct elements $s,s' \in S$, we add the edge $x_{s}x_{s'}$.
    \item[(ii)] For $i \in \{1,2,3\}$, and any element $s \in S$, we add the edges $a_{i}x_{s}$ and $b_ix_s$.
    \item[(iii)] For $i \in \{1,2,3\}$ and for any set $c \in \mathcal{C}$, we add the edges $a_{i}x_{c}$ and $b_ix_c$.
    \item[(iv)] For any two distinct sets $c_{1},c_{2} \in \mathcal{C}$, if $c_{1} \cap c_{2} \neq \emptyset$ then we add the edge $x_{c_{1}}x_{c_{2}}$.
    \item[(v)] For each $c \in \mathcal{C}$, add edges $x_{c}^{1}x_{c}^{2}, x_{c}^{1}x_{c}^{3},x_{c}^{2}x_{c}^{3},x_{c}^{2}x_{c},x_{c}^{3}x_{c}$. Further, for each element $s \in c$, add the edges $x_{c}^{3}x_{s}$ and $x_{c}x_{s}$. 
    Hence, $x_{c}^{1}$, $x_{c}^{2}$, $x_{c}^{3}$ is an $\{x_s:s\in c\}$-tail at $x_c$.
    \item[(vi)] For $i,j \in \{1,2,3\}$, we add the edges $a_{i}b_j$, and the edges $b_ib_j$ if $i<j$.
    \item[(vii)] For $i\in \{1,2,3\}$, we add the edges
    $b_i^{1}b_i^{2}, b_i^{1}b_i^{3}, b_i^{2}b_i^{3}, b_i^{2}b_i, b_i^{3}b_i$,
    for every element $s \in S$, add the edge $b_i^{3}x_{s}$, and add the edge $b_i^{3}a_{i}$.
    Hence, $b_i^{1}$, $b_i^{2}$, $b_i^{3}$ is an $(\{x_s:s\in S\}\cup \{a_i\})$-tail at $b_i$.
\end{itemize}

We claim that $G$ has a square root $H$ if and only if $S, \mathcal{C}$ is a YES-instance of the three part planar set splitting problem,
and if it has one, $H$ is $6$-apex.
We collect some basic observations about any square root of $G$ (if it exists).

\begin{obs}
\label{existanceofedge}
Let $H$ be a square root of $G$. For each $c \in \mathcal{C}$ and $i \in \{1,2,3\}$, there is $s\in c$ such that $x_sa_{i} \in E(H)$. 
\end{obs}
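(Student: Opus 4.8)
The plan is to use Observation~\ref{tail} to pin down the relevant neighborhoods in $H$, and then to exploit the edge $a_ix_c\in E(G)$ together with a carefully chosen non-edge of $G$. First I would apply Observation~\ref{tail} to the $\{x_s:s\in c\}$-tail $x_c^1,x_c^2,x_c^3$ at $x_c$ (guaranteed by edge set~(v)). This yields, in any square root $H$, that
\[
N_H(x_c)=\{x_c^3\}\cup\{x_s:s\in c\}
\quad\text{and}\quad
N_H(x_c^3)=\{x_c,x_c^2\}.
\]
These two facts are the workhorses of the argument: the first restricts where any short $H$-path from $a_i$ to $x_c$ can arrive at $x_c$, and the second records that $a_i$ is \emph{not} an $H$-neighbor of $x_c^3$.

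Next, observe that $a_ix_c\in E(G)$ by edge set~(iii), so since $G=H^2$ the distance between $a_i$ and $x_c$ in $H$ is at most~$2$. I would split into two cases. If $a_ix_c\in E(H)$, then because $x_cx_c^3\in E(H)$ (from $x_c^3\in N_H(x_c)$ above), the vertices $a_i$ and $x_c^3$ would be joined by the $H$-path $a_i,x_c,x_c^3$ of length~$2$, forcing $a_ix_c^3\in E(H^2)=E(G)$. But inspecting the construction gives $N_G(x_c^3)=\{x_c^1,x_c^2,x_c\}\cup\{x_s:s\in c\}$, which does not contain $a_i$; hence $a_ix_c^3\notin E(G)$, a contradiction. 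So $a_ix_c\notin E(H)$, the distance is exactly~$2$, and there is a common $H$-neighbor $w$ of $a_i$ and $x_c$. Since $w\in N_H(x_c)=\{x_c^3\}\cup\{x_s:s\in c\}$ and $w\neq x_c^3$ (as $a_i\notin N_H(x_c^3)=\{x_c,x_c^2\}$), we must have $w=x_s$ for some $s\in c$, and $w\in N_H(a_i)$ gives exactly $x_sa_i\in E(H)$, as desired.

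The main obstacle, such as it is, lies in ruling out the direct edge $a_ix_c\in E(H)$: everything hinges on having identified the right witness non-edge $a_ix_c^3$, whose \emph{presence} in $G$ the tail structure at $x_c$ would otherwise force. Once the neighborhoods supplied by Observation~\ref{tail} are in hand, the remainder is the short verification that $x_c^3$ cannot serve as the connecting vertex, leaving only the element vertices $x_s$ with $s\in c$.
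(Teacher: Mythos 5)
Your proof is correct and follows essentially the same route as the paper: apply Observation~\ref{tail} to the tail at $x_c$, use the edge $a_ix_c\in E(G)$ to get an $H$-path of length at most $2$, and note the connecting vertex cannot be $x_c^3$. Your first case (ruling out $a_ix_c\in E(H)$ via the non-edge $a_ix_c^3$) is redundant, since the equality $N_H(x_c)=\{x_c^3\}\cup\{x_s:s\in c\}$ that you already established directly excludes $a_i$ as an $H$-neighbor of $x_c$, which is how the paper handles it.
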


\begin{proof}
Because of the tail at $x_c$, Observation~\ref{tail} implies that $N_H(x_c)=\{x_c^3\}\cup \{x_s:s\in c\}$ and
that $x_c^3$ has no neighbors outside of the tail.  Since $G$ has the edge $x_{c}a_{i}$, $x_{c}$ must be at distance two from $a_{i}$,
and this implies that there is an edge $x_sa_{i}$ for some $s\in c$. 
\end{proof}

\begin{obs}
\label{exactlyone}
Let $H$ be a square root of $G$. For any element $s \in S$, the vertex $x_{s}$ is adjacent to at most one of $a_{i}$ for $i \in \{1,2,3\}$ in $H$.
\end{obs}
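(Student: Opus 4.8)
The plan is to exploit the defining property of a square root: since $G=H^{2}$, two distinct vertices of $G$ are adjacent in $G$ if and only if they are at distance at most two in $H$. The whole argument then rests on a single structural feature of the construction, namely that $a_{1}$, $a_{2}$, $a_{3}$ are pairwise \emph{non}-adjacent in $G$.

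First I would check directly from the edge list that no edge $a_{i}a_{j}$ with $i\neq j$ is ever introduced: the edges incident to $a_{i}$ arise only in (ii), (iii), (vi), and (vii), and each of them joins $a_{i}$ to some $x_{s}$, some $x_{c}$, some $b_{j}$, or to $b_{i}^{3}$, but never to another $a_{j}$. Hence $a_{i}a_{j}\notin E(G)$, and because $G=H^{2}$ this forces the distance between $a_{i}$ and $a_{j}$ in $H$ to be at least three for all distinct $i,j$.

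Second, I would argue by contradiction. Suppose some $x_{s}$ is adjacent in $H$ to two distinct vertices among $a_{1},a_{2},a_{3}$, say $a_{i}$ and $a_{j}$. Then $a_{i}\,x_{s}\,a_{j}$ is a walk of length two in $H$, so $a_{i}$ and $a_{j}$ lie at distance at most two in $H$; this means $a_{i}a_{j}\in E(H^{2})=E(G)$, contradicting the non-adjacency established above. Therefore $x_{s}$ can be adjacent to at most one of the $a_{i}$, as claimed.

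There is no real computational obstacle here; the only point that must be pinned down is that the $a_{i}$ are mutually non-adjacent in $G$ (in contrast to the $b_{i}$, which are pairwise joined through (vi)), after which the conclusion is immediate from the distance interpretation of the square. It is worth noting that this non-adjacency is precisely the feature that will later let the three vertices $a_{1},a_{2},a_{3}$ encode the three parts of the set-splitting partition, so it is natural that it is the crux of this observation as well.
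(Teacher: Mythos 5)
Your proposal is correct and is essentially the paper's own argument: the paper likewise notes that $\{a_1,a_2,a_3\}$ is an independent set in $G$ and concludes that no two of these vertices can share a neighbor in any square root $H$. You simply spell out the verification from the edge list and the distance-two contradiction in more detail.
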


\begin{proof}
Since $\{a_1,a_2,a_3\}$ is an independent set in $G$, no two of these vertices can have a common neighbor in $H$.
\end{proof}

Combining the previous two observations, we obtain one implication of the reduction.
\begin{cor}\label{valid}
If $G$ has a square root, then $S, \mathcal{C}$ is a YES-instance of the three part planar set splitting problem.
\end{cor}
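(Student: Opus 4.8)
The plan is to read off a three-way partition of $S$ directly from how the vertices $x_s$ attach to the apex vertices $a_1,a_2,a_3$ in the square root $H$. Concretely, for each $i\in\{1,2,3\}$ I would set
\[
S_i' = \{\,s\in S : x_sa_i\in E(H)\,\}.
\]
By Observation~\ref{exactlyone}, every $x_s$ is adjacent in $H$ to at most one of $a_1,a_2,a_3$, so the three sets $S_1',S_2',S_3'$ are pairwise disjoint. This is the crucial structural input: it guarantees that the ``colour'' assigned to an element by its $a_i$-neighbour is well defined.

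The one technical wrinkle is that the $S_i'$ need not cover all of $S$, since some $x_s$ may have no neighbour among the $a_i$ in $H$. To turn the $S_i'$ into an honest partition, I would simply throw all leftover elements into one part: writing $T = S\setminus(S_1'\cup S_2'\cup S_3')$, I define $S_1 = S_1'\cup T$, $S_2 = S_2'$, and $S_3 = S_3'$. By disjointness of the $S_i'$ this is a partition of $S$ into three (possibly empty) parts.

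It then remains to verify that every $c\in\mathcal{C}$ meets each part. For a fixed $c$ and each $i\in\{1,2,3\}$, Observation~\ref{existanceofedge} supplies an element $s\in c$ with $x_sa_i\in E(H)$, so $s\in S_i'\subseteq S_i$; hence $c\cap S_i\neq\emptyset$ for every $i$. This is exactly the condition required of a YES-instance of three parts planar set splitting, completing the argument. I do not expect any genuine obstacle here: the content lies entirely in the two preceding observations, and the only point demanding care is handling the uncovered elements $T$, which is harmless precisely because adding elements to a part can never destroy the nonemptiness of $c\cap S_i$.
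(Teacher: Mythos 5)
Your proposal is correct and matches the paper's proof essentially verbatim: both define $S_i'=\{s\in S: x_sa_i\in E(H)\}$, invoke Observation~\ref{exactlyone} for pairwise disjointness and Observation~\ref{existanceofedge} to show each $c\in\mathcal{C}$ meets every part. The only (immaterial) difference is that the paper absorbs the uncovered elements into $S_3$ rather than $S_1$.
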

\begin{proof}
Suppose that $H$ is a square root of $G$.  For $i\in\{1,2,3\}$, let $S'_i=\{s\in S:x_sa_i\in E(H)\}$.  By Observation~\ref{exactlyone},
these sets are pairwise disjoint.  Hence, $S_1=S'_1$, $S_2=S'_2$, and $S_3=S\setminus (S'_1\cup S'_2)$ is a partition of $S$
with $S'_3\subseteq S_3$.
Moreover, by Observation~\ref{existanceofedge}, for each $c\in \mathcal{C}$ and $i\in \{1,2,3\}$, we have $c\cap S_i\neq\emptyset$.
\end{proof}

We now need to show the converse.

\begin{obs}\label{back}
If $S,\mathcal{C}$ is a YES-instance of the three planar set splitting problem, then $G$ has a $6$-apex square root.
\end{obs}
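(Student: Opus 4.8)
The plan is to construct $H$ explicitly from the set-splitting solution and then verify both that $H^2=G$ and that $H$ is $6$-apex. Fix a partition $S=S_1\cup S_2\cup S_3$ witnessing the YES-instance, so that $c\cap S_i\neq\emptyset$ for every $c\in\mathcal{C}$ and every $i$; note that since $\mathcal{C}\neq\emptyset$ this already forces each $S_i\neq\emptyset$. I define $H$ on the same vertex set as $G$ by including exactly the edges dictated by the tails together with a few ``hub'' edges. Concretely: include the tail paths $x_c^1x_c^2$, $x_c^2x_c^3$, $x_c^3x_c$ and $b_i^1b_i^2$, $b_i^2b_i^3$, $b_i^3b_i$ forced by Observation~\ref{tail}; make each $b_i$ adjacent to every $x_s$ ($s\in S$) and to $a_i$; make each $x_c$ adjacent to $x_s$ exactly for $s\in c$; and finally make $a_i$ adjacent to $x_s$ exactly for $s\in S_i$. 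I add no edges among the $x_s$, no edges $a_ia_j$, and no edges $b_ib_j$.

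Then I would verify $H^2=G$ by going through the edge list (i)--(vii). The clique on $\{x_s\}$ in (i) holds because all $x_s$ share the common neighbour $b_1$; the triangle on $\{b_i\}$ and the edges $a_ib_i$ in (vi) hold because any two $b$'s share some common $x_s$-neighbour and $a_i\sim b_i$ directly. The tail edges of (v) and (vii) are reproduced automatically, since the $H$-neighbourhoods forced by Observation~\ref{tail} are precisely the tail neighbourhoods, so squaring the path recreates exactly the tail in $G$. The genuinely interesting families are (iii) and the edges $a_ib_j$ with $i\neq j$ in (vi): a common neighbour of $x_c$ and $a_i$ is some $x_s$ with $s\in c\cap S_i$, and a common neighbour of $a_i$ and $b_j$ is some $x_s$ with $s\in S_i$; both exist precisely because the partition witnesses the YES-instance (giving $c\cap S_i\neq\emptyset$, and in particular $S_i\neq\emptyset$). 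This is the only place the set-splitting hypothesis enters.

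The more tedious half is checking $H^2\subseteq G$, i.e.\ that no spurious edges are created. I would argue this pair-by-pair from the explicit neighbourhoods; the representative cases are $a_ia_j$ (no common neighbour, since their only $x$-neighbours lie in the disjoint sets $S_i$ and $S_j$ and they share no $b$), $x_{c_1}x_{c_2}$ with $c_1\cap c_2=\emptyset$ (the only possible common neighbour would be an $x_s$ with $s\in c_1\cap c_2$), and all pairs involving the tail vertices $x_c^1,x_c^2,b_i^1,b_i^2$, whose $H$-neighbourhoods are tiny and self-contained. I expect this bookkeeping to be the main obstacle---not because any single case is hard, but because correctness hinges on every case simultaneously, so the argument must be organized carefully, e.g.\ by first recording $N_H(x_s)$, $N_H(x_c)$, $N_H(a_i)$, $N_H(b_i)$ explicitly and then reading off distances.

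Finally, for the $6$-apex claim I would delete the six vertices $a_1,a_2,a_3,b_1,b_2,b_3$. In $H$ minus this set, each $b_i$-tail becomes an isolated path $b_i^1b_i^2b_i^3$, and every $x_s$ loses all its neighbours except the $x_c$ with $s\in c$. What remains on $\{x_s\}\cup\{x_c\}$ is therefore exactly the incidence graph of $S$ and $\mathcal{C}$, which is planar by hypothesis, with a pendant path $x_c^3x_c^2x_c^1$ attached at each $x_c$. Since attaching pendant paths and taking disjoint unions with paths preserves planarity, $H-\{a_1,a_2,a_3,b_1,b_2,b_3\}$ is planar, so $H$ is $6$-apex, completing the construction.
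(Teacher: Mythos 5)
Your proposal is correct and follows essentially the same route as the paper: the graph $H$ you construct is identical (tail edges, $b_i$ adjacent to all $x_s$ and to $a_i$, $x_c$ adjacent to $x_s$ for $s\in c$, and $a_ix_s$ exactly for $s\in S_i$), the containment $G\subseteq H^2$ is verified via the same common-neighbour witnesses, and the same six vertices $a_1,a_2,a_3,b_1,b_2,b_3$ are deleted to exhibit planarity. The only cosmetic difference is in checking $H^2\subseteq G$: the paper packages your pair-by-pair bookkeeping as the observation that it suffices to check that each $H$-neighbourhood induces a clique in $G$, which is exactly the organization you propose when you suggest recording $N_H(x_s)$, $N_H(x_c)$, $N_H(a_i)$, $N_H(b_i)$ and reading off distances.
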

\begin{proof}
Let $S_{1},S_{2},S_{3}$ be a partition of $S$ such that for each $c \in \mathcal{C}$ and $i\in\{1,2,3\}$,
we have that there is an element $s_{c,i} \in c\cap S_{i}$.  Let us also choose $s_{c,4}\in c$ arbitrarily.
Let $H$ be the graph with $V(H)=V(G)$ and with the following edges:
\begin{itemize}
\item The edges forced by the tails, that is, for each $c\in \mathcal{C}$,
the edges $x_c^1x_c^2$, $x_c^2x_c^3$, $x_c^3x_c$, and $x_cx_s$ for each $s\in c$;
and for $i\in\{1,2,3\}$, the edges $b_i^1b_i^2$, $b_i^2b_i^3$, $b_i^3b_i$, $b_ix_s$ for each
$s\in S$ and $b_ia_i$ for $i\in\{1,2,3\}$.
\item The edges $x_sa_i$ for each $i\in\{1,2,3\}$ and $s\in S_i$.
\end{itemize}
Note that $H-\{a_1,a_2,a_3, b_1,b_2,b_3\}$ is planar, as it is obtained from the incidence graph of $S,\mathcal{C}$
by adding pendant paths corresponding to the tails.  Hence, $H$ is $6$-apex.

Now, $H^2$ contains the edges (i), since $x_s$ and $x_{s'}$ are both adjacent to $b_1$ in $H$.
It contains the edges (ii), since $x_sb_i,a_ib_i\in E(H)$.
It contains the edges (iii), since $a_ix_{s_{c,i}}, b_ix_{s_{c,i}},x_{s_{c,i}}x_c\in E(H)$.
It contains the edges (iv), since if $s\in c_1\cap c_2$, then $x_{c_1}x_s,x_{c_2}x_s\in E(H)$.
It contains the edges (v), since these follow from the path $x_{c}^{1}x_{c}^{2}x_{c}^{3}x_c$
and the edges $x_cx_s$ for $s\in c$.
It contains the edges (vi), since for any $c\in \mathcal{C}$,
$a_ix_{s_{c,i}},b_jx_{s_{c,i}}, b_ix_{s_{c,i}}\in E(H)$.
Finally, it contains the edges (vii), since these follow from the path $b_i^{1}b_i^{2}b_i^{3}b_i$ and
the edges $b_ia_i$ and $b_ix_s$ for $s\in S$.  Therefore, $G\subseteq H^2$.

To see that $G=H^2$, it suffices to show that the neighborhood of each vertex of $H$ induces a clique in $G$.
For $i\in\{1,2,3\}$,
\begin{itemize}
\item $N_H(a_i)=S_i\cup \{b_i\}$ is a clique in $G$ covered by the edges (i) and (ii);
\item $N_H(b_i)=S\cup\{a_i,b_i^3\}$ is a clique in $G$ covered by the edges (i), (ii), and (vii);
\item for $j\in\{1,2,3\}$, $N_H(b_i^j)$ is a clique in $G$ of size at most two covered by the edges (vii);
\item for $s\in S_i$, $N_H(x_s)=\{x_c:c\in \mathcal{C}, s\in c\}\cup\{a_i,b_1,b_2,b_3\}$
is a clique in $G$ covered by the edges (iii), (iv), and (vi).
\end{itemize}
Moreover, for $c\in\mathcal{C}$, $N_H(x_c)=\{x_s:s\in c\}\cup \{x_c^3\}$ is a clique in $G$ covered by the edges (i) and (v),
and for $j\in\{1,2,3\}$, $N_H(x_c^j)$ is clique in $G$ of size at most two covered by the edges (v).
\end{proof}

We can now prove our main result.
\begin{proof}[Proof of Theorem~\ref{mainthm-prom}]
Clearly, promise-$\mathcal{H}$ square-root problem is in NP, since we can guess a square root $H$ and easily check if $H^{2} = G$.

Given an instance $S,\mathcal{C}$ of the three planar set splitting problem, we create the graph $G$ as described at the beginning
of the section; clearly, this can be done in polynomial time.  By Corollary~\ref{valid}, if $S,\mathcal{C}$ is a NO-instance,
then $G$ has no square root; and by Observation~\ref{back}, if it is a YES-instance, then $G$ has a 6-apex square root.
Hence, this is a polynomial-time reduction from the the three planar set splitting problem to the promise-$\mathcal{H}$ square-root problem,
showing the NP-completeness of the latter.
\end{proof}

\bibliography{squarebib}
 \bibliographystyle{plain}
\end{document}